\def\ps@headings{%
\def\@oddhead{\mbox{}\scriptsize\rightmark \hfil \thepage}%
\def\@evenhead{\scriptsize\thepage \hfil \leftmark\mbox{}}%
\def\@oddfoot{}%
\def\@evenfoot{}}
\newcommand{\ora}{\overrightarrow}
\newcommand{\G}{\ora{G}}
\newcommand{\E}{\ora{E}}
\newcommand{\e}{\ora{e}}
\newcommand{\ve}{\varepsilon}
\newcommand{\T}{\mathbbm{T}}
\newcommand{\mca}{\mathcal{A}}
\newcommand{\mcf}{\mathcal{F}}
\newcommand{\mcb}{\mathcal{B}}
\newcommand{\mcd}{\mathcal{D}}
\newcommand{\mcg}{\mathcal{G}}
\newcommand{\mcv}{\mathcal{V}}
\newcommand{\mcc}{\mathcal{C}}
\newcommand{\ol}{\overline}
\newtheorem{lem}{Lemma}
\newtheorem{thm}{Theorem}
\newtheorem{prf}{Proof}
\newtheorem{define}{Definition}
\title{Exact Modeling of the Performance of Random Linear Network Coding in Finite-buffer Networks}
\author{Nima Torabkhani$^\dag$,
     Badri N. Vellambi$^{\ddag}$,
	  Ahmad Beirami$^\dag$,
     Faramarz Fekri$^\dag$\\
     $^\dag$ School of Electrical and Computer Engineering,
    Georgia Institute of Technology\\
     $^{\ddag}$ Institute for Telecommunications Research,
    University of South Australia\\
     E-mail: \{nima, beirami, fekri\}@ece.gatech.edu, badri.vellambi@unisa.edu.au
\vspace{-.25in}
\thanks{This material is based upon work supported by the National Science Foundation under Grant No. CCF-0914630, and the Australian Research
Council under ARC Discovery Grants DP0880223 and DP1094571.}
}
\begin{document}
\maketitle
\pagestyle{empty}
\thispagestyle{empty}

\begin{abstract}
In this paper, we present an exact model for the analysis of the performance of Random Linear Network Coding (RLNC) in wired erasure networks with finite buffers. In such networks, packets are delayed due to either random link erasures or blocking by full buffers. We assert that because of RLNC, the content of buffers have dependencies which cannot be captured directly using the classical queueing theoretical models. We model the performance of the network using Markov chains by a careful derivation of the buffer occupancy states and their transition rules. We verify by simulations that the proposed framework results in an accurate measure of the network throughput offered by RLNC. Further, we introduce a class of acyclic networks for which the number of state variables is significantly reduced.
\end{abstract}

\section{Introduction}\label{FB-intro}

It is well-known that linear network codes achieve the min-cut capacity of networks for unicast applications~\cite{YeungNCJrnl}. In fact, random linear codes over large Galois fields suffice to achieve the min-cut capacity~\cite{Koetter2003}. \emph{Random linear network coding} (RLNC) has been shown to improve the performance in distributed settings with time-varying network parameters. In these networks, a distributed and packetized network coding scheme, where each node stores received packets and forwards random linear combinations of the stored packets when required, was introduced in~\cite{PNC2003}. As a result, for a network of nodes with no buffer limitations, all arriving packets at a node are stored, and then used to generate new packets to send. Hence, there is no information loss. However, in this case, upon reception of a packet, a node has to determine whether or not the incoming packet is in the linear span of its previously stored packets. Further, for generating every coded packet, all stored packets need to be accessed. It is therefore desirable to have limited buffer sizes, since it limits the complexity of storage and coded packet generation process. Further, using small buffers at relay nodes simplifies practical issues such as on-chip board space and memory-access latency as well as reducing the average packet delay~\cite{Appenzeller04CCR, Yashar06routerswith}.

The problem of computing capacity and designing efficient coding schemes for erasure networks has been widely studied in the absence of buffer constraints\cite{AmirDana:capacity, PakzadF05, YeungNCJrnl}. The limitations posed by finite buffers were considered by~\cite{NetCod:Lun}, specifically in a simple two-hop line network. Inspired by this work, in~\cite{Vellambi2010}, the authors present a Markov-chain-based approach to model the dynamics of the system and the packet occupancy of every intermediate node to approximate the performance parameters (throughput and latency) of a multi-hop line network with lossy links. Several challenges arise when extending the study from a single intermediate node to a multi-hop line network. Results from~\cite{Vellambi2010} were extended to other communication scenarios, such as block-based random linear coding for line networks~\cite{Asilomar-2010}, and general wired networks with lossless feedback and random routing~\cite{ITW-2010}. However, the main challenge of modeling the evolution of \emph{buffer occupancy} or \emph{innovativeness of buffer contents} in general network topologies when RLNC is used, was not addressed in these works.

The queueing theory framework for lossy networks with finite buffers of~\cite{Tayfur_QT1,Tayfur_QT2} attempts to model the packets of the network as customers, the delay due to packet loss over links as service times in the nodes, and the buffer size at intermediate nodes as the maximum queue size. However, this packet-customer equivalence fails to accurately model RLNC in general network topologies. This is due to the possibility of packet replication at intermediate nodes, or more generally, the potential correlation in the contents of the buffers of various intermediate nodes. This correlation or dependency between contents of the buffers cannot be captured directly in the customer-server based queueing model.

In this paper, our objective is to study the relation between throughput of RLNC and the buffer sizes of intermediate nodes in the small buffer regime. The first and the key step in our approach is to derive using algebraic tools the state of the buffers using which the dynamics of the network can be completely characterized. We then derive the state update rules for each transmission in the network. Finally, using the developed state space and update rules, we obtain the throughput of the network using Monte Carlo simulations and compare the results to the actual packetized implementation of RLNC. We believe the proposed modeling framework is a significant step towards developing a theoretical framework for computing the throughput capacity and the packet delay distribution in general finite-buffer wired networks.

This paper is organized as follows. First, we present a formal definition of the problem and the challenges in Section~\ref{FB-sec1}. Next, we investigate the tools and steps for modeling the buffer states in Section~\ref{FB-sec2}. We then introduce in Section~\ref{FB-sec3}, a general class of networks for which the complexity of our modeling is significantly lesser. Finally, Section~\ref{FB-sec4} presents our model validation results using simulations. Conclusions are summarized in Section~\ref{FB-sec5}.

\section{Problem Setup and Challenges}\label{FB-sec1}
Throughout this work, we model the network by an acyclic directed graph $\G(V,\E)$, where packets can be transmitted over a link $\e=(u,v)$ only from the node $u$ to $v$. The system is analyzed using a discrete-time model; each node can transmit at most one packet over a link in an epoch. The loss process on each link is assumed to be memoryless, i.e., packets transmitted on a link $\ora{e}=(u,v)\in\ora{E}$ are lost randomly with a probability of $\ve_{\ora{e}}=\ve_{(u,v)}$. Note that the erasures are due to the quality of links (e.g., noise, interference) and do not represent packet blocking due to finiteness of the buffers. Further, the packet loss processes on different links are assumed to be independent. Each node $v\in V$ has a buffer size of $m_v$ packets with each packet having a fixed size. Source and destination are assumed to be able to store an infinitude of packets. Throughout this paper, node $s$ and node $d$ represent the source and destination nodes, resp. Also, for any $x\in[0,1]$, $\ol{x}\triangleq 1-x$. The unicast information-theoretic throughput is also defined as the expected rate (in packets/epoch) at which information packets are transferred from the source to the destination when the network is in steady-state. In other words, if $\tau_k$ is the time it takes for $k$ information packets to be transmitted to the destination, the throughput capacity is given by
\begin{equation}
\mcc(\ora{G})=\lim_{k\to\infty}{(\tau_k)^{-1}}k.
\label{TPTdef}
\end{equation}

There are two key challenges in finite-buffer networks. The first challenge is the choice of optimal buffer management strategy, which also depends on the routing/coding scheme that is in use. Due to losses on links, and finiteness of buffers, transmission of a packet by a node $u$ on $\ora{e}=(u,v)$ does not guarantee successful reception by the node $v$. Thus, in the absence of any feedback, a node $u$ does not know if it can delete a packet from its buffer to make room for its next incoming packet. Further, it is also unclear if transmitting a packet via several parallel paths will increase the throughput. The second challenge is due to the possible replication of packets in the network. Hence, it is neither possible to model the system dynamics by a simple queueing model where packets are customers and the buffers as queue sizes, nor is it feasible to treat the packets as flows in the network.

{Random Linear Network Coding} (RLNC) attractively bypasses these two challenges. It eliminates the need for a feedback strategy to delete stored packets because the physical act of storing a packet becomes immaterial. It also eliminates the need for active replication by allowing transmitted/stored packets to be treated as elements of an abstract vector space. This makes RLNC a favorable choice for practical schemes in finite-buffer scenarios.

We consider the following packet-coding scheme introduced in~\cite{NetCod:Lun}, which is a finite-buffer adaptation of RLNC. In this scheme, at each epoch, random linear coding is used for both the packet generation and storage by intermediate nodes. As an example, consider a node $u$ of buffer size $m_u$. At a given epoch, $u$ generates an encoded packet by performing a random linear combinations of $m_u$ stored data packets (over a sufficiently large Galois field\footnote{The size of the Galois field needs to be sufficiently large to increase the chance of innovativeness of the coded packet.} $\mathbb{F}_q$), and transmits the coded packet on an outgoing link. For storage, when a packet successfully arrives at a node $v$, the node multiplies the received packet by a random vector chosen uniformly from $\mathbb{F}_q^{m_v}$, and adds the resultant vector components to each of the present buffer contents.

Therefore, using RLNC, after just a single packet reception, the entire buffer becomes physically full with multiples of the received packet. Thus, even though the buffer of the node $u$ is almost always physically full, the number of stored packets that is innovative w.r.t any other subset of nodes can vary from 0 to $m_u$. As an example, suppose that two nodes $a$ and $b$ receive/store two packets each generated from three original packets from a relay $c$. In this case, $a$ and $b$ will have two innovative packets each for the destination. Now, suppose $a$ delivers a packet to the destination. Then, $b$ still contains two innovative packets for the destination. However, if $a$ delivers another packet to the destination, $b$ will only have one innovative packet for the destination, since both nodes together originally possessed only three innovative packets for the destination. In this example, the challenges of tracking the number of innovative packets and the interdependency between buffer contents gets compounded further as the packets from $a$ and $b$ are propagated to the other intermediate nodes. This interdependency between buffer contents signals the need for a novel notion of \emph{occupancy} to track the number of innovative packets each node has for the destination, and consequently, to determine the throughput capacity of the network. This notion will be formalized in the following section.

The main motivating factor to develop a theoretical model for these networks is to understand the throughput capacity under RLNC. In order to measure the throughput of RLNC in these networks, one option is to perform a Monte Carlo simulation where encoded packets are generated using coefficients in a large finite field $\mathbb{F}_q$, and buffer updates are performed upon each successful reception. This is a significantly time-consuming simulation due to large field operations. A theoretical model that tracks buffer dynamics based on occupancy of buffers will be a simpler alternate means. As we will see, the developed model provides a more efficient way of measuring the performance of finite-buffer networks. Additionally, it provides us with intuitive insights on the dynamics of buffer updates, which is a major step towards computing performance metrics for such networks, and analyzing their key trade-offs.

\section{Exact Modeling of Finite-buffer RLNC}
\label{FB-sec2}

Here, we introduce the tools and steps that enable us to track changes in the buffer contents of nodes.

To identify the throughput as defined in (\ref{TPTdef}), we assume that the source possesses a sufficiently large block of packets that has to be transmitted to the destination. The first aim is to formalize the notion of buffer occupancy by investigating the dimension of the span of the stored packets in the buffers. Let $\{T_1,T_2,\ldots,T_k\}$ be the original information packets at the source. Let $[n]\triangleq\{1,2,\ldots,n\}$ denote the set of all intermediate nodes, where $n=|V|-2$. Let $P_{i,j}(t)$ be the packet contained in buffer slot $j$ of relay $i$ at time epoch $t$, where $P_{i,j}(t)=\sum_{l=1}^{k}a_{i,j,l}T_l$, $i\in[n]$, $j\in[m_i]$, and $a_{i,j,l}$ is a coefficient in the chosen Galois field $\mathbb{F}_q$. Let $\mcv(S)(t)\triangleq \text{span}\{P_{i,j}(t)|~ j\in[m_i], i\in S\}$ for all $S\subseteq[n]$. To simplify the notations, we will drop the reference to time in $\mcv(S)(t)$ by using $\mcv(S)$. Also, we define $S^c\triangleq [n]\setminus S$.

\begin{define}
For any two subsets of the intermediate nodes $S,S'\subseteq[n]$, we define the \emph{innovativeness} of $S$ w.r.t. $S'$ at time instant $t$ as:
\begin{align}
I_{S\to S'}= \dim\big(\mcv(S)\big) - \dim\big(\mcv(S)\cap \mcv(S')\big).\label{InnovDef}
\end{align}
\end{define}
In other words, $I_{S\to S'}$ gives the number of innovative packets that buffer contents of nodes in $S$ can generate which cannot be generated by the contents of the buffers of nodes in $S'$.
\begin{define}
The occupancy vector $\{b_S\}_{S\subseteq[n]}$ of the network is defined~\footnote{The precise definition of the occupancy vector must consider the packets that have already reached $\{d\}$ by using $b_S\triangleq \dim(\mcv(S)) - \dim(\mcv(S)\cap \mcv(S^c\cup \{d\}))$. However, the inclusion of $\{d\}$ affects update rules only when dealing with the destination. For simplicity, the equivalent definition without the inclusion of $\{d\}$ is used in all cases not involving the destination.} to be
\begin{align}
b_S\triangleq \dim\big(\mcv(S)\big) - \dim\big(\mcv(S)\cap \mcv(S^c)\big),\,\, S\subseteq [n].
\end{align}
\label{OccDef}
\end{define}
\vspace{-.15in}
The following lemma shows that the knowledge of occupancy vector $\{b_S\}_{S\subseteq[n]}$ is equivalent to knowing the innovativeness of any subset of the relay nodes w.r.t. any other subset. This result significantly reduces the number of state space variables.
\begin{lem}
For $S,S'\subseteq[n]$, $I_{S\to S'}= b_{S'^c}-b_{\{S\cup S'\}^c}.$
\label{LemInnov}
\end{lem}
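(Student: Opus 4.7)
The plan is to rewrite both sides of the claimed identity in terms of dimensions of the subspaces $\mcv(\cdot)$ alone, using the standard dimension formula $\dim(\mcv(A)+\mcv(B)) = \dim(\mcv(A))+\dim(\mcv(B))-\dim(\mcv(A)\cap\mcv(B))$, together with the observation that $\mcv(A\cup B) = \mcv(A)+\mcv(B)$ since $\mcv$ is defined as a span of buffer contents.

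First, I would massage the innovativeness. Starting from the definition $I_{S\to S'} = \dim(\mcv(S)) - \dim(\mcv(S)\cap\mcv(S'))$, the dimension formula applied to $\mcv(S)$ and $\mcv(S')$ immediately yields $I_{S\to S'} = \dim(\mcv(S\cup S')) - \dim(\mcv(S'))$.

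Next, I would derive a cleaner expression for the occupancy $b_T$ of an arbitrary subset $T\subseteq[n]$. Since $T\cup T^c = [n]$, applying the dimension formula to $\mcv(T)$ and $\mcv(T^c)$ gives $\dim(\mcv(T)\cap\mcv(T^c)) = \dim(\mcv(T))+\dim(\mcv(T^c))-\dim(\mcv([n]))$. Substituting this into Definition~\ref{OccDef} produces the key reduction $b_T = \dim(\mcv([n])) - \dim(\mcv(T^c))$, which removes the asymmetry between $T$ and $T^c$ in the original definition.

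Finally, I would specialize this reduction at $T=S'^c$ (so $T^c=S'$) and $T=(S\cup S')^c$ (so $T^c=S\cup S'$), then subtract to obtain $b_{S'^c} - b_{(S\cup S')^c} = \dim(\mcv(S\cup S')) - \dim(\mcv(S'))$, which matches the expression for $I_{S\to S'}$ derived in the first step. The proof is essentially a direct computation; there is no serious obstacle. The only step that demands a moment of thought is recognizing that the occupancy $b_T$ admits the compact form $\dim(\mcv([n])) - \dim(\mcv(T^c))$, after which the common term $\dim(\mcv([n]))$ cancels and the subtraction telescopes.
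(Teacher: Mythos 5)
Your proof is correct: the identity $\mcv(A\cup B)=\mcv(A)+\mcv(B)$ plus the dimension formula gives both $I_{S\to S'}=\dim(\mcv(S\cup S'))-\dim(\mcv(S'))$ and the key reduction $b_T=\dim(\mcv([n]))-\dim(\mcv(T^c))$, and the subtraction at $T=S'^c$ and $T=(S\cup S')^c$ finishes the argument exactly as you describe. The paper omits its own proof of this lemma, so there is nothing to compare against, but your direct computation is the natural one and is valid for the occupancy as defined in Definition~\ref{OccDef} (note only that under the footnote's refined definition including $\{d\}$, the same calculation yields $I_{S\to S'\cup\{d\}}$ rather than $I_{S\to S'}$).
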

\begin{prf}
Proof omitted due to lack of space.
\end{prf}

Since the occupancy vector provides the innovativeness of the contents of each node w.r.t the remaining nodes, we need to be able to track the dynamics of the occupancy vector for successful transmissions on links to complete the system modeling. To do so, let superscripts $-$ and $+$ denote the status of a system parameter before and after a successful packet transmission on a link. The following results derive the rules for updating the occupancy vector when successful transmissions occur. Throughout these results, we denote \emph{whp}/\emph{wlp} to qualify an event if its probability of occurrence can be made arbitrarily close to unity/zero by increasing the field size alone.

\begin{lem}
(\emph{Source-to-Relay}) The update rules when a relay $i$ successfully receives a packet from $s$ are as follows \emph{whp}.
\begin{itemize}
\item If $i\in S\subseteq [n]$ and $b_{\{i\}}<m_i$, then $b_S^+=b_S^-+1$.
\item If $i\notin S\subseteq [n]$, $b_{\{i\}}<m_i$ and $I_{\{i\}\to S^c\setminus \{i\}}^{-}=m_i$, then $b_S^+=b_S^-+1$.
\item Otherwise, $b_S^+=b_S^-$.
\end{itemize}
\label{LemSR}
\end{lem}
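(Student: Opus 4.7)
My plan is to reduce the lemma to tracking only two dimensions. First I would use $\mcv(S)+\mcv(S^c)=\mcv([n])$ together with $\dim(A\cap B)=\dim A+\dim B-\dim(A+B)$ to rewrite $b_S=\dim(\mcv([n]))-\dim(\mcv(S^c))$, so that $\Delta b_S=\Delta\dim(\mcv([n]))-\Delta\dim(\mcv(S^c))$. This reformulation turns the three-bullet statement into a question of characterising, in terms of $b_{\{i\}}$ and the innovativeness $I_{\{i\}\to\cdot}$, precisely when each of the two dimensions on the right grows by one after a source-to-$i$ reception.

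Next I would exploit that the packet $p$ sent by the source is a uniformly random $\mathbbm{F}_q$-linear combination of the $k\to\infty$ source packets, so whp $p\notin\mcv([n])^-$ and $p$ contributes a brand-new direction. The storage rule $P_{i,j}^+=P_{i,j}^-+c_jp$ with i.i.d.\ uniform $c_j\in\mathbbm{F}_q$ then splits into two regimes: (i) if $\dim(\mcv(\{i\})^-)<m_i$, then whp $\mcv(\{i\})^+=\mcv(\{i\})^-+\text{span}(p)$, which has one extra dimension; (ii) if $\dim(\mcv(\{i\})^-)=m_i$, then $\mcv(\{i\})^+$ is an $m_i$-dimensional subspace of $\mcv(\{i\})^-+\text{span}(p)$, specifically the graph of the random linear functional $\phi\colon\mcv(\{i\})^-\to\mathbbm{F}_q$ defined by $\phi(P_{i,j}^-)=c_j$.

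The crux is the following intermediate claim, which I would prove from the two regimes above: for every $T\subseteq[n]$ with $i\in T$, whp $\Delta\dim(\mcv(T))$ equals $1$ when $I_{\{i\}\to T\setminus\{i\}}^-<m_i$ and equals $0$ when $I_{\{i\}\to T\setminus\{i\}}^-=m_i$. In regime (i), the hypothesis itself forces $I_{\{i\}\to T\setminus\{i\}}^-\le\dim(\mcv(\{i\})^-)<m_i$, and $\mcv(T)^+=\mcv(T)^-+\text{span}(p)$ gains one dimension. In regime (ii), an inclusion-exclusion expansion shows $\Delta\dim(\mcv(T))=\dim(\mcv(\{i\})^-\cap\mcv(T\setminus\{i\}))-\dim(\mcv(\{i\})^+\cap\mcv(T\setminus\{i\}))$, and a short argument using $p\notin\mcv([n])^-$ identifies $\mcv(\{i\})^+\cap\mcv(T\setminus\{i\})=\ker\bigl(\phi|_{\mcv(\{i\})^-\cap\mcv(T\setminus\{i\})}\bigr)$: any $v+\phi(v)p$ lying in $\mcv(T\setminus\{i\})$ must have $\phi(v)=0$, else $p\in\mcv([n])^-$. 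In regime (ii) the condition $I_{\{i\}\to T\setminus\{i\}}^-<m_i$ is equivalent to the restricted subspace $\mcv(\{i\})^-\cap\mcv(T\setminus\{i\})$ being non-trivial, in which case $\phi$ restricts surjectively onto $\mathbbm{F}_q$ whp and the kernel drops by one; otherwise the subspace is $\{0\}$ and nothing changes.

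Finally I would apply the intermediate claim with $T=[n]$, noting $I_{\{i\}\to[n]\setminus\{i\}}^-=b_{\{i\}}^-$, to conclude $\Delta\dim(\mcv([n]))=1$ iff $b_{\{i\}}<m_i$, and with $T=S^c$ when $i\in S^c$, while $\Delta\dim(\mcv(S^c))=0$ trivially when $i\in S$. Combining via $\Delta b_S=\Delta\dim(\mcv([n]))-\Delta\dim(\mcv(S^c))$ immediately yields the first two bullets of the lemma, and the ``otherwise'' bullet follows by a short case-split covering both $b_{\{i\}}=m_i$ (which forces $\mcv(\{i\})\cap\mcv(\{i\}^c)=\{0\}$, hence $I_{\{i\}\to S^c\setminus\{i\}}^-=m_i$ automatically, and both $\Delta$-dimensions vanish) and $i\notin S$ with $b_{\{i\}}<m_i$ and $I_{\{i\}\to S^c\setminus\{i\}}^-<m_i$ (both $\Delta$-dimensions equal $1$ and cancel). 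I expect the hardest step to be the surjectivity assertion for $\phi$ in regime (ii): one must show the uniformly random $c\in\mathbbm{F}_q^{m_i}$ induces a functional that simultaneously restricts surjectively to every relevant non-trivial subspace, which follows by bounding each failure probability by $q^{-1}$ and union-bounding over the finitely many subspaces that can arise.
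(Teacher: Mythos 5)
Your proof is correct. Note that the paper omits its own proof of this lemma, so the only basis for comparison is the appendix proof of the relay-to-relay rule, which argues directly from Definition~2 by listing the buffer contents $\mca,\mcb,\mcc,\mcd$ of $i$, $S\setminus\{i\}$, the receiver, and the rest, and case-splitting on linear dependencies among them. Your route is genuinely different and arguably tidier: the identity $b_S=\dim(\mcv([n]))-\dim(\mcv(S^c))$ (valid since $\mcv(S)+\mcv(S^c)=\mcv([n])$) reduces everything to deciding when $\dim(\mcv(T))$ grows for $T\ni i$, and the observation that a rank-$m_i$ buffer updates to the graph of the random functional $\phi(P_{i,j}^-)=c_j$ yields the key identification $\mcv(\{i\})^+\cap\mcv(T\setminus\{i\})=\ker\bigl(\phi|_{\mcv(\{i\})^-\cap\mcv(T\setminus\{i\})}\bigr)$, from which both the $+1$ and the $0$ cases follow with failure probability at most $q^{-1}$ per set, finished by a union bound over the at most $2^n$ subsets. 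Your decomposition handles cleanly the interesting regime behind the second bullet ($\dim(\mcv(\{i\})^-)=m_i$ but $b_{\{i\}}<m_i$, where node $i$ trades a dimension shared with $S$ for a new one), whereas the paper's direct basis-manipulation style is the one that carries over verbatim to the relay-to-relay case, where the transmitted packet is not new to the whole network. Two points to make explicit in a full write-up: (a) $p\notin\mcv([n])^-$ whp requires noting that the source combines $k$ original packets with $k$ exceeding the total relay buffer dimension, giving failure probability at most $q^{-1}$; (b) under the footnoted definition that includes the destination, the same argument goes through with $\mcv(S^c)$ and $\mcv([n])$ replaced by $\mcv(S^c\cup\{d\})$ and $\mcv([n]\cup\{d\})$.
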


\begin{prf}
Proof omitted due to lack of space.
\end{prf}

\begin{lem}
(\emph{Relay-to-Relay}) The update rules when relay $j$ successfully receives a packet from relay $i$ are as follows \emph{whp}.
\begin{itemize}
\item If $i\in S\subseteq [n]$, $j\in S^c$, $I_{\{j\}\to S^c\setminus \{j\}}^{-}<m_j$ and $I_{\{i\}\to S^c}^{-}>0$, then $b_S^+=b_S^--1$.
\item Otherwise, $b_S^+=b_S^-$.
\end{itemize}
\label{LemRR}
\end{lem}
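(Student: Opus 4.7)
The plan is to express $b_S=\dim(\mcv([n]))-\dim(\mcv(S^c))$, which follows from $\dim(\mcv(S))+\dim(\mcv(S^c))=\dim(\mcv(S)+\mcv(S^c))+\dim(\mcv(S)\cap\mcv(S^c))$ together with $\mcv(S)+\mcv(S^c)=\mcv([n])$. Because the transmitted packet $p$ is a random element of $\mcv(\{i\})^-\subseteq\mcv([n])^-$, the total span $\mcv([n])$ is unchanged by the transmission, so tracking $b_S$ reduces to tracking $\dim(\mcv(S^c))$. Only buffer $j$ is modified, so $\mcv(S^c\setminus\{j\})$ is invariant throughout, and the entire update lives in the quotient $\mathbb{F}_q^k/\mcv(S^c\setminus\{j\})$; in this quotient $\mcv(\{j\})^-$ projects to a subspace $\bar U$ of dimension exactly $I_{\{j\}\to S^c\setminus\{j\}}^-$ (by Lemma~\ref{LemInnov}).

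First I would dispose of the trivial sub-case $j\in S$: then $j\notin S^c$, so $\mcv(S^c)^+=\mcv(S^c)^-$ and $b_S$ is unchanged. Assume now $j\in S^c$ and write $P_{j,\ell}^+=P_{j,\ell}^-+r_\ell p$ with $r_\ell$ uniform in $\mathbb{F}_q$, so that $\mcv(\{j\})^+\subseteq \mcv(\{j\})^-+\text{span}(p)$. If $I_{\{i\}\to S^c}^-=0$ (which automatically covers $i\in S^c$), then $\mcv(\{i\})^-\subseteq\mcv(S^c)^-$, hence $p\in\mcv(S^c)^-$ and $\mcv(\{j\})^+\subseteq\mcv(S^c)^-$; the reverse inclusion follows from $P_{j,\ell}^-=P_{j,\ell}^+-r_\ell p$ together with $p\in\mcv(S^c)^+$, giving $\mcv(S^c)^+=\mcv(S^c)^-$ and leaving $b_S$ unchanged.

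The substantive case is $j\in S^c$, $i\in S$, and $I_{\{i\}\to S^c}^->0$: here $p$ is uniform in $\mcv(\{i\})^-$, of which only a $q^{-I_{\{i\}\to S^c}^-}$ fraction lies in $\mcv(S^c)^-$, so $p\notin\mcv(S^c)^-$ whp, i.e.\ $\bar p\notin\bar U$ in the quotient. Consequently $\dim(\mcv(S^c))$ grows by either $0$ or $1$, depending on whether $\bar p\in\overline{\mcv(\{j\})^+}=\text{span}\{\,\overline{P_{j,\ell}^-}+r_\ell\bar p:\ell\in[m_j]\,\}$. If $I_{\{j\}\to S^c\setminus\{j\}}^-<m_j$, then $\dim\bar U<m_j$ and some nontrivial relation $\overline{P_{j,\ell_0}^-}=\sum_k c_k\overline{P_{j,\ell_k}^-}$ exists among the projected buffer vectors; applying the same combination to the perturbed vectors yields $(r_{\ell_0}-\sum_k c_k r_{\ell_k})\bar p$, which is nonzero whp by Schwartz--Zippel, placing $\bar p$ in $\overline{\mcv(\{j\})^+}$ and giving $b_S^+=b_S^--1$. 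If instead $I_{\{j\}\to S^c\setminus\{j\}}^-=m_j$, the $\overline{P_{j,\ell}^-}$ form a basis of an $m_j$-dimensional $\bar U$ and the perturbed vectors span only an $m_j$-dimensional hyperplane of $\bar U\oplus\text{span}(\bar p)$ (the graph of $(\beta_1,\ldots,\beta_{m_j})\mapsto\sum_\ell r_\ell\beta_\ell$) that deterministically omits $\bar p$, so $\dim(\mcv(S^c))$ is unchanged. The main technical hurdle is this last rank-1 quotient analysis: one must verify that the relevant determinantal polynomials in $\{r_\ell\}$ are not identically zero, so that Schwartz--Zippel yields the \emph{whp} qualifier for sufficiently large $q$.
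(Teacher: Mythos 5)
Your proposal is correct, but it takes a genuinely different route from the paper's own proof. The paper argues directly on the objects in Definition~\ref{OccDef}: it fixes the buffer lists $\mca^-,\mcb^-,\mcc^-,\mcd^-$, and tracks the intersection $\text{span}\{\mca\cup\mcb\}\cap\text{span}\{\mcc\cup\mcd\}$ through the update; in the case $I^-_{\{j\}\to S^c\setminus\{j\}}=m_j$ it builds an explicit basis $\mcf^-$ of that intersection, perturbs it to $\mcf^+$, and shows through sub-cases (with the bad sub-case occurring wlp) that the intersection dimension is preserved. You instead use the modular law to rewrite $b_S=\dim(\mcv([n]))-\dim(\mcv(S^c))$, note that a relay-to-relay transfer leaves $\mcv([n])$ invariant (the transmitted packet lies in $\mcv(\{i\})$, which is unchanged), and reduce the lemma to a rank computation for the rank-one perturbation of buffer $j$ in the quotient by $\mcv(S^c\setminus\{j\})$. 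This buys a shorter argument with fewer randomization events: the decrement case is a single step of the form that a nontrivial linear relation hit with random coefficients produces a nonzero multiple of $\bar p$ whp, and the case $I^-_{\{j\}\to S^c\setminus\{j\}}=m_j$ becomes fully deterministic (the perturbed vectors span the graph of a linear functional on $\bar U$, which omits $\bar p$), replacing the paper's Case 2 basis-tracking machinery; what the paper's route buys is that it manipulates exactly the quantities of Definition~\ref{OccDef}, so its bookkeeping style carries over verbatim to the source-to-relay and relay-to-destination lemmas. One point to tighten in your write-up: in the branch $I^-_{\{i\}\to S^c}=0$ you assert $p\in\mcv(S^c)^+$ in order to get $\mcv(S^c)^+=\mcv(S^c)^-$, but this is itself only a whp statement --- if $p\notin\mcv(S^c\setminus\{j\})$, write $p=\sum_\ell c_\ell P^-_{j,\ell}+w$ with $w\in\mcv(S^c\setminus\{j\})$ and observe $\sum_\ell c_\ell P^+_{j,\ell}-w=\big(1+\sum_\ell c_\ell r_\ell\big)p$ with $1+\sum_\ell c_\ell r_\ell\neq 0$ whp --- so it deserves the same one-line randomization argument you use elsewhere (the paper is equally terse at the analogous spot, where it asserts $\mcg^+=\mcg^-$ in its Case 1). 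Also, $\dim\bar U=I^-_{\{j\}\to S^c\setminus\{j\}}$ follows from Definition~1 and the quotient-dimension identity alone; Lemma~\ref{LemInnov} is not actually needed there.
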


\begin{prf}
See Appendix~\ref{prfRR}.
\end{prf}

\begin{lem}
(\emph{Relay-to-Destination}) The update rules when $d$ successfully receives a packet from relay $j$ are as follows \emph{whp}.
\begin{itemize}
\item If $i\in S\subseteq [n]$ and $I_{\{i\}\to S^c}^{-}>0$, then $b_S^+=b_S^--1$.
\item Otherwise, $b_S^+=b_S^-$.
\end{itemize}
\label{LemRD}
\end{lem}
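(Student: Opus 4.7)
The plan is to work with the precise occupancy definition from the footnote, namely $b_S = \dim(\mcv(S)) - \dim(\mcv(S) \cap \mcv(S^c \cup \{d\}))$, and to track exactly which of these subspaces change when relay $j$ (the sender indicated in the statement) delivers a packet $P$ to $d$. Since a transmitting node does not alter its own buffer under RLNC, the only affected span is $\mcv(\{d\})$, which becomes $\mcv(\{d\})^- + \langle P\rangle$, where $P = \sum_{\ell} \alpha_\ell P_{j,\ell}^-$ is a uniformly random element of $\mcv(\{j\})$ determined by the random coefficient vector $(\alpha_1, \dots, \alpha_{m_j}) \in \mathbb{F}_q^{m_j}$. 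Consequently $\mcv(S)$ is unchanged for every $S \subseteq [n]$, and any change in $b_S$ must come from a change in $\dim(\mcv(S) \cap \mcv(S^c \cup \{d\}))$.

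First I would dispose of the case $j \in S^c$. Here $\mcv(\{j\}) \subseteq \mcv(S^c) \subseteq \mcv(S^c \cup \{d\})^-$, so $P$ already lies in $\mcv(S^c \cup \{d\})^-$; the span does not grow, and $b_S^+ = b_S^-$, which matches the ``otherwise'' branch. When $j \in S$, the single added vector $P$ can raise $\dim(\mcv(S^c \cup \{d\}))$ by at most one, so $\dim(\mcv(S) \cap \mcv(S^c \cup \{d\}))$ rises by either $0$ or $1$; it rises by exactly one iff $P$ lies in $\mcv(S) \setminus \mcv(S^c \cup \{d\})^-$. Since $P \in \mcv(\{j\}) \subseteq \mcv(S)$ is automatic, this reduces to whether $P \notin \mcv(S^c \cup \{d\})^-$.

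Next I would translate this event into the stated condition via a field-size argument. Interpreting $I_{\{j\}\to S^c}^-$ through the footnote's convention (i.e., with $\{d\}$ adjoined), the case $I_{\{j\}\to S^c}^- = 0$ forces $\mcv(\{j\}) \subseteq \mcv(S^c \cup \{d\})^-$, so $P$ is deterministically absorbed and $b_S^+ = b_S^-$. If instead $I_{\{j\}\to S^c}^- > 0$, then the set of coefficient vectors in $\mathbb{F}_q^{m_j}$ whose induced combination lies in $\mcv(S^c \cup \{d\})^-$ is a proper $\mathbb{F}_q$-subspace of $\mathbb{F}_q^{m_j}$, hence a fraction of at most $q^{-1}$ of the sample space. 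Thus $P \notin \mcv(S^c \cup \{d\})^-$ \emph{whp}, the intersection grows by exactly one \emph{whp}, and $b_S^+ = b_S^- - 1$ \emph{whp}, as claimed.

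The main obstacle will be keeping the destination's span consistently bookkept with the innovativeness notation used in the statement, and in particular verifying that the inclusion of $\{d\}$ in the footnote's definition does not spoil the reduction to $I_{\{j\}\to S^c}^-$; once this is harmonized, what remains is a standard finite-field linear-algebra estimate parallel to the one used in Lemma~\ref{LemRR}.
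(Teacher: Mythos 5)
Your argument is correct, and since the paper itself omits the proof of Lemma~\ref{LemRD} (``due to lack of space''), there is nothing to match it against line by line; what you wrote is the natural specialization of the machinery the authors use in the Appendix for Lemma~\ref{LemRR}, and it is substantially simpler for exactly the reason you exploit: the destination has unbounded storage, so $\mcv(\{d\})$ merely gains $\langle P\rangle$ rather than having $P$ folded into a full finite buffer, which is what forces the lengthy Case~1/Case~2 analysis in the relay-to-relay proof. Your key reductions are all sound: no relay's buffer changes, so $\mcv(S)$ is fixed for every $S\subseteq[n]$ and only $\dim\big(\mcv(S)\cap\mcv(S^c\cup\{d\})\big)$ can move; it moves by at most one because adjoining a single vector raises the intersection dimension by at most the rise in $\dim\mcv(S^c\cup\{d\})$; and since $P\in\mcv(\{j\})\subseteq\mcv(S)$ when the sender lies in $S$, the decrement happens exactly when $P\notin\mcv(S^c\cup\{d\})^-$, which by the proper-subspace counting argument occurs with probability at least $1-q^{-1}$ whenever $I^-_{\{j\}\to S^c\cup\{d\}}>0$. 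You were also right to flag and resolve the two notational wrinkles: the lemma's $i$ should be read as the transmitting relay $j$, and the innovativeness condition must be taken in the footnote's $d$-inclusive sense --- the literal $I^-_{\{j\}\to S^c}>0$ condition would wrongly predict a decrement in the edge case where $j$'s span is already contained in $\mcv(S^c\cup\{d\})^-$ but not in $\mcv(S^c)^-$, so your harmonization is not merely cosmetic but necessary for the statement to be true.
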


\begin{prf}
Proof omitted due to lack of space.
\end{prf}

On the whole, an update of buffer occupancy occurs only when the delivered packet is innovative for the receiving node and the buffer of the receiving node is not full. Next, we describe how the state update rules could be utilized to obtain the throughput of a network. Let $\E^*=(\e_1,\ldots, \e_{|\E|})$ be an ordering of the edge set $\E$, and let $l(t)\in\{0,1\}^{|\E|}$ represent the realization of the channels at time $t$. That is $l_i(t)=1$ if the $i^\textrm{th}$ edge $\e_i$ in $\E^*$ does not erase the transmitted packet during the epoch $t$. Then, given the occupancy vector $\{b_S(t)\}_{S\subseteq[n]}$ and the channel realization $l(t)$, the occupancy vector $\{b_S(t+1)\}_{S\subseteq[n]}$ can be determined using the state update rules presented in Lemmas~\ref{LemSR},~\ref{LemRR},~\ref{LemRD}.

Further, the state transition probability matrix $\T$ for the corresponding Markov chain can be identified as follows. Also, let $T_{\e}$ be the state transition matrix given a successful packet transmission on the link $\e$. For any $\e\in\E$, $T_{\e}$ can be determined using Lemmas~\ref{LemSR},~\ref{LemRR},~\ref{LemRD}. Therefore,

\begin{equation}
\T=\sum_{l\in\{0,1\}^{|\E|}} \Big(\prod_{j:l_j=0}\ve_{\e_j}\Big)\Big(\prod_{i:l_i=1}\ol\ve_{\e_i}T_{\e_i}\Big).
\end{equation}

This Markov chain can be proved to be \emph{irreducible, aperiodic}, and \emph{ergodic}~\cite{Feller1957, Vellambi2010}. Therefore, it possesses a unique steady-state probability distribution. Moreover, due to ergodicity, the time averages are equivalent to the statistical averages. Therefore, the throughput capacity $\mcc(\ora{G})$ can be determined using the steady state probability of the event that the network is in a state wherein the nodes possessing a link to the destination have innovative packets as follows.

\begin{equation}
\mcc(\ora{G})=\sum_{l\in\{0,1\}^{|\E|},\{b_S(t)\}}\mathfrak{N}(l,\{b_S(t)\})\cdot \Pr\Big(\{b_S(t)\}\Big),
\end{equation}
where $\mathfrak{N}(l,\{b_S(t)\})$ represents the number of successfully transmitted packets when state $\{b_S(t)\}$ and channel realization $l$ occur together.

\section{State Size Reduction in a Class of Networks}
\label{FB-sec3}

In Section~\ref{FB-sec2}, we observed that the number of state variables that we need to track at each time epoch is $2^n-1$ since $b_S$, the innovativeness of every subset of relay nodes w.r.t. its complement, must be considered. In this section, we show that all innovativeness terms need not be tracked to completely define the state of the system. This is a consequence of the intuition gained in line networks~\cite{Vellambi2010}. In line networks, we need to only track $I_{i\to S}$, where $S=\{i+1,\ldots,n\}$, i.e., all those intermediate nodes that are farther from the source hop-distance-wise. Equivalently, for line networks, it suffices that we track $b_S$ for $S=\{1,\cdots,i\}$ for $i\in[n]$. Extending that intuition, define $\mathscr{A}\triangleq \{S\subseteq [n]: \textrm{Every } j\in S^c \textrm{ has a path in $S^c$ to } d\}$ as illustrated in Fig.~\ref{FB-HNet}. 
\begin{figure}[htbp!]
\centering
\includegraphics[width=2.25in]{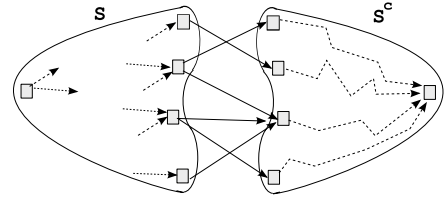}
\caption{Illustration of a set $S$ in $\mathscr{A}$.}\label{FB-HNet}
\end{figure}
\begin{figure}[htbp!]
\centering
\vspace{-.15in}
\includegraphics[width=1.7in,angle=-90]{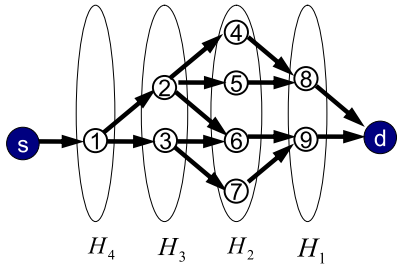}
\vspace{-.175in}
\caption{An example of a directed acyclic network in $\mathcal{N}$.}\label{FB-HNet}
\vspace{-.05in}
\end{figure}
Consider a partition of the set of relay nodes into types $\{H_1, H_2,\ldots\}$, where a relay node $v$ belongs to $H_k$ if the shortest hop-distance from $v$ to the destination $d$ is $k$, and $H_0\triangleq\{d\}$.  Define a class of networks $\mathcal{N}$ where every link starts at some node in $H_i$ for some $i$ and ends at some node in $H_{i-1}$. Figure~\ref{FB-HNet} illustrates a network from this class.
This structure enables us to track significantly lesser number of innovativeness components using the following result, which shows that tracking the occupancy for sets in $\mathscr{A}$ suffices to define the system completely.
\begin{thm}
For any directed acyclic network in $\mathscr{N}$, we need only track $b_S$ for $S\in\mathscr{A}$.
\end{thm}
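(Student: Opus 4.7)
The plan is to show that $\{b_S\}_{S\in\mathscr{A}}$ is a closed sufficient statistic: both the updates of Lemmas~\ref{LemSR}--\ref{LemRD} for $S\in\mathscr{A}$ and the throughput expression in~(5) can be evaluated using only these occupancies. I would proceed in three steps.

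First, I would recast the occupancy in a more convenient form. Applying the dimension identity $\dim U+\dim W=\dim(U+W)+\dim(U\cap W)$ with $U=\mcv(S)$, $W=\mcv(S^c)$, together with $\mcv(S)+\mcv(S^c)=\mcv([n])$, Definition~\ref{OccDef} becomes
\begin{equation}
b_S \;=\; \dim\mcv([n])-\dim\mcv(S^c).
\end{equation}
Thus $\{b_S\}_{S\subseteq[n]}$ is equivalent to the family of complement-subspace dimensions, and Lemma~\ref{LemInnov} becomes a telescoping identity among such dimensions. The theorem reduces to showing that for every $S\subseteq[n]$, $\dim\mcv(S^c)$ can be recovered from $\{\dim\mcv(T^c):T\in\mathscr{A}\}$, i.e.\ from subspace dimensions over \emph{downstream-closed} complements.

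Second, I would exploit the hop-layered structure of $\mathcal{N}$. Since every edge goes from $H_k$ to $H_{k-1}$, each packet at $v\in H_k$ is a linear combination of packets at $v$'s in-neighbors in $H_{k+1}$, giving $\mcv(\{v\})\subseteq\mcv(\mathrm{Parents}(v))$ and, iterated, $\mcv(\{v\})\subseteq\mcv(\mathrm{Anc}(v))$ for the ancestor set. Using this containment I would prove the key \emph{reduction lemma}: given $S\notin\mathscr{A}$, let $U\subseteq S^c$ be the largest subset such that every $u\in U$ has a path in $U$ to $d$; then $U^c\in\mathscr{A}$, and a layer-wise induction on the stuck residue $W=S^c\setminus U$ from the lowest hop-distance upward shows that $\dim\mcv(S^c)$ is a computable function of $\{\dim\mcv(T^c)\}_{T\in\mathscr{A}}$. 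Each stuck node $w\in W$, when peeled off, has relay ancestors forced by the layered-edge rule to lie strictly above $w$ in layer index, so their contributions have been absorbed at an earlier stage of the induction.

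Third, with the reduction lemma in hand, I would verify each of Lemmas~\ref{LemSR}--\ref{LemRD} and the throughput formula~(5) directly. Lemma~\ref{LemInnov} rewrites each of $I_{\{i\}\to S^c\setminus\{i\}}$, $I_{\{j\}\to S^c\setminus\{j\}}$, and $I_{\{i\}\to S^c}$ as a difference of $b$-values at auxiliary sets such as $S\cup\{j\}$, $S\setminus\{i\}$, and $\{i\}$. When such an auxiliary set lies outside $\mathscr{A}$, the reduction lemma expresses its occupancy in terms of $\mathscr{A}$-occupancies, so every update of $b_S$ for $S\in\mathscr{A}$ is $\mathscr{A}$-closed. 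The throughput summand $\mathfrak{N}(l,\{b_S\})$ in~(5) is treated analogously, using the footnoted variant of Definition~\ref{OccDef} to fold the destination $\{d\}$ into the complement; the resulting sets are downstream-closed by construction, and hence already live in $\mathscr{A}$.

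The main obstacle is the inductive layer-peeling inside the reduction lemma: proving that the contribution $\mcv(\{w\})$ of every stuck $w$ actually collapses into the span of the already-peeled nodes. Pulling this off needs the hop-layered edge-rule of $\mathcal{N}$ in an essential way, since only that rule rules out ``lateral'' or ``upward'' edges that could create a stuck node whose ancestor sits inside $U$; in a general DAG the collapse fails. A secondary subtlety is the \emph{whp}-genericity of RLNC over large $\mathbb{F}_q$, which I would invoke to guarantee that the subspace containment $\mcv(\{w\})\subseteq\mcv(\mathrm{Anc}(w))$ realizes the expected dimensions rather than reducing to a degenerate equality; with both ingredients in place, the induction closes and the theorem follows.
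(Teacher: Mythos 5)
The paper itself omits the proof of this theorem, so there is nothing to compare your route against; judged on its own terms, your argument has a genuine gap at its core. The layer-peeling reduction lemma rests on the time-synchronous containment $\mcv(\{v\})\subseteq\mcv(\mathrm{Parents}(v))$, and this containment is false for finite-buffer RLNC. A packet stored at $v$ was a combination of its parents' buffer contents at the \emph{earlier} epoch when it was sent; since then the parents have kept receiving packets, and their $m$-slot spans rotate inside the (larger) span of everything they ever received, so the forwarded dimension generically leaves the parents' current span. Already in a line network (which lies in $\mathscr{N}$) with $m_1=m_2=1$: node $1$ forwards a multiple of $T_1$ to node $2$ and then receives $T_2$, so node $1$ holds $T_1+\beta T_2$ while node $2$ holds a multiple of $T_1$, and $\mcv(\{2\})\not\subseteq\mcv(\{1\})$. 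Worse, the large-field genericity you invoke in the last paragraph works \emph{against} you here: whp the stale dimension is not retained in the parent's current span. This inter-temporal dependence between buffer contents is precisely the phenomenon the occupancy vector was introduced to capture, so the peeling induction cannot be grounded on that containment, and the ``collapse'' of a stuck node's span into already-peeled nodes does not follow.

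There is also a structural issue with the target you reduce to. Recovering $\dim\mcv(W)$ for arbitrary $W$ from $\{\dim\mcv(U)\}$ over downstream-closed $U$ cannot be a purely static linear-algebra fact (dimensions of unions of subspaces do not determine dimensions of sub-unions); if such a functional relation holds at all, it is an invariant of the \emph{reachable} states and must be proved by induction over the transitions of Lemmas~\ref{LemSR}--\ref{LemRD}, i.e., by showing the $\mathscr{A}$-marginal is closed under every source-to-relay, relay-to-relay and relay-to-destination update --- not by a one-shot peeling of a fixed configuration. Note where closure actually bites: for $S\in\mathscr{A}$ and a transmission on $(i,j)$ with $i\in S$, $j\in S^c$, Lemma~\ref{LemInnov} turns the conditions of Lemma~\ref{LemRR} into $b_{S\cup\{j\}}-b_S<m_j$ and $b_S-b_{S\setminus\{i\}}>0$; the set $S\setminus\{i\}$ is always in $\mathscr{A}$ (because of the edge $(i,j)$), but $S\cup\{j\}$ need not be --- e.g., two layer-$2$ relays $a,u$ and two layer-$1$ relays $j,w$ with edges $a\to j$, $a\to w$, $u\to j$, and $S=\{a,w\}$ --- so this is exactly the term your reduction lemma was supposed to supply, and it is the step that currently has no valid support. (A smaller slip: membership in $\mathscr{A}$ requires the \emph{complement} to be downstream-closed, so your final claim that destination-folded sets ``are downstream-closed and hence in $\mathscr{A}$'' conflates the two.) As it stands, the proposal does not establish the theorem.
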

\begin{proof}
Proof omitted due to lack of space.
\end{proof}

\section{Simulation Results}
\label{FB-sec4}
\vspace{-.025 in} 

In this section, we present the results of our performance modeling framework using state update rules in comparison with an actual packetized implementation of RLNC, and will show that our framework accurately models the buffer dynamics of the network.

We consider Network $1$ and Network $2$ shown in Fig.~\ref{FB-Net} to compare the results of our simulations.
\begin{figure}[htbp!]
\vspace{-.1in}
\subfigure[Network $1$.]{
\includegraphics[width=0.65in,angle=-90]{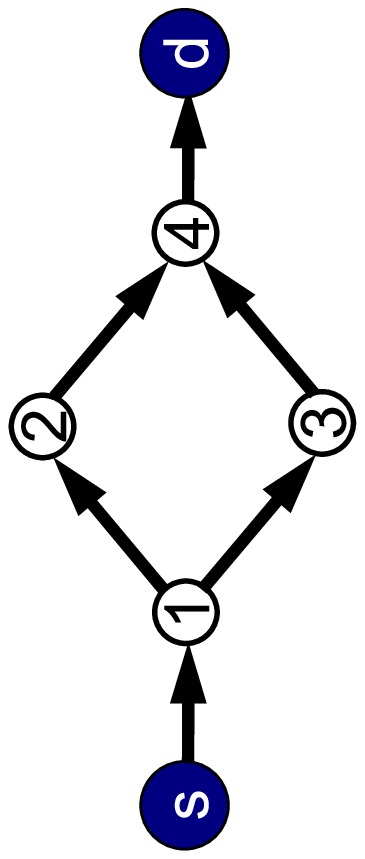}}
\subfigure[Network $2$.]{\includegraphics[width=0.65in,angle=-90]{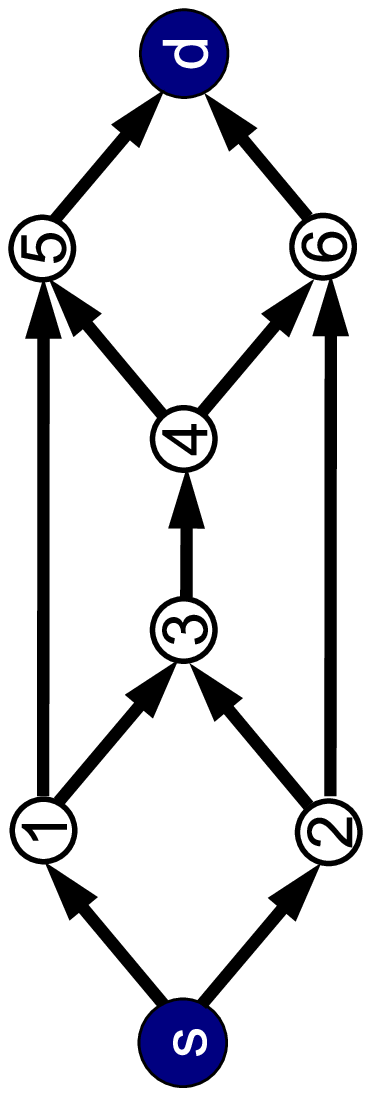}}
\vspace{-.05in}
\caption{Networks Considered for simulation}
\vspace{-.1in}
\label{FB-Net}
\end{figure}
In Network $1$, the edges have erasure probabilities $\ve_{(s,1)}=0.1$, $\ve_{(1,2)}=0.6$, $\ve_{(1,3)}=0.5$, $\ve_{(2,4)}=0.4$, $\ve_{(3,4)}=0.5$, and $\ve_{(4,d)}=0.1$. In Network $2$, all the edges have $\ve=0.5$ except the edges $\{(s,1), (s,2), (5,d), (6,d)\}$ for which $\ve=0.25$.
All the intermediate nodes are assumed to have the same buffer size. In order to measure the exact performance parameters of this network, a block of size $k=10^5$ packets is sent from the source to the destination.
\begin{figure}[htbp!]
\centering
\psfrag{Yaxis}{\small{\hspace{-3mm}Throughput (\emph{packets/epoch})}}
\psfrag{Xaxis}{\small{Buffer size (\emph{packets})}}
\psfrag{SimulationSimulationSimulationSimulation1}{\footnotesize{Sim. of actual packetized RLNC}}
\psfrag{SimulationSimulationSimulationSimulationrrrrrr}{\footnotesize{Sim. using state update rules}}
\includegraphics[width=3.4in, height = 2in]{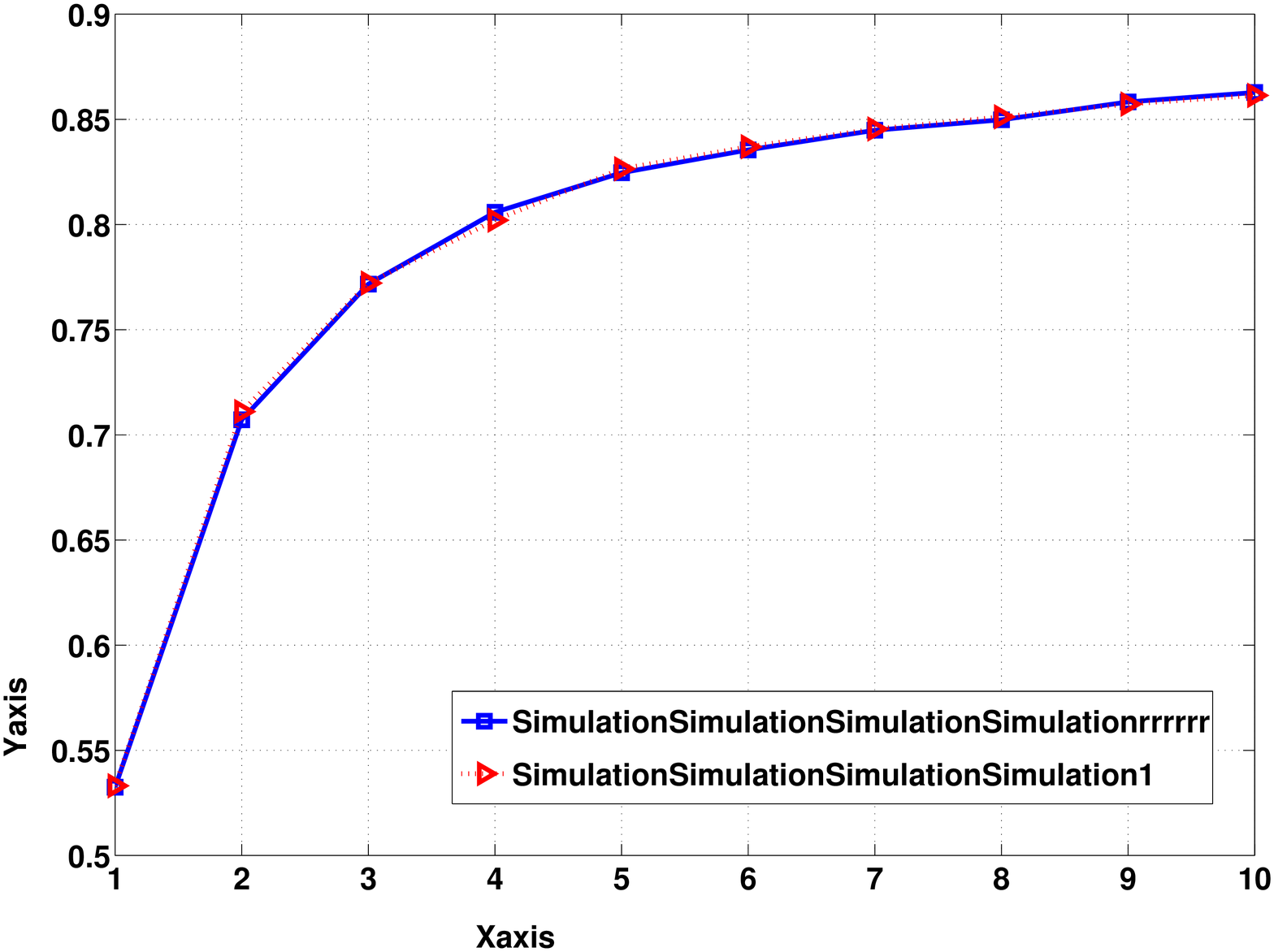}
\vspace{-.05in}
\caption{Throughput of Network $1$ for different buffer sizes.}\label{Net1_Results}
\vspace{-.05in}
\end{figure}
\begin{figure}[htbp!]
\centering
\psfrag{Yaxis}{\small{\hspace{-3mm}Throughput (\emph{packets/epoch})}}
\psfrag{Xaxis}{\small{Buffer size (\emph{packets})}}
\psfrag{SimulationSimulationSimulationSimulation1}{\footnotesize{Sim. of actual packetized RLNC}}
\psfrag{SimulationSimulationSimulationSimulationrrrrrr}{\footnotesize{Sim. using state update rules}}
\includegraphics[width=3.4in, height = 2in]{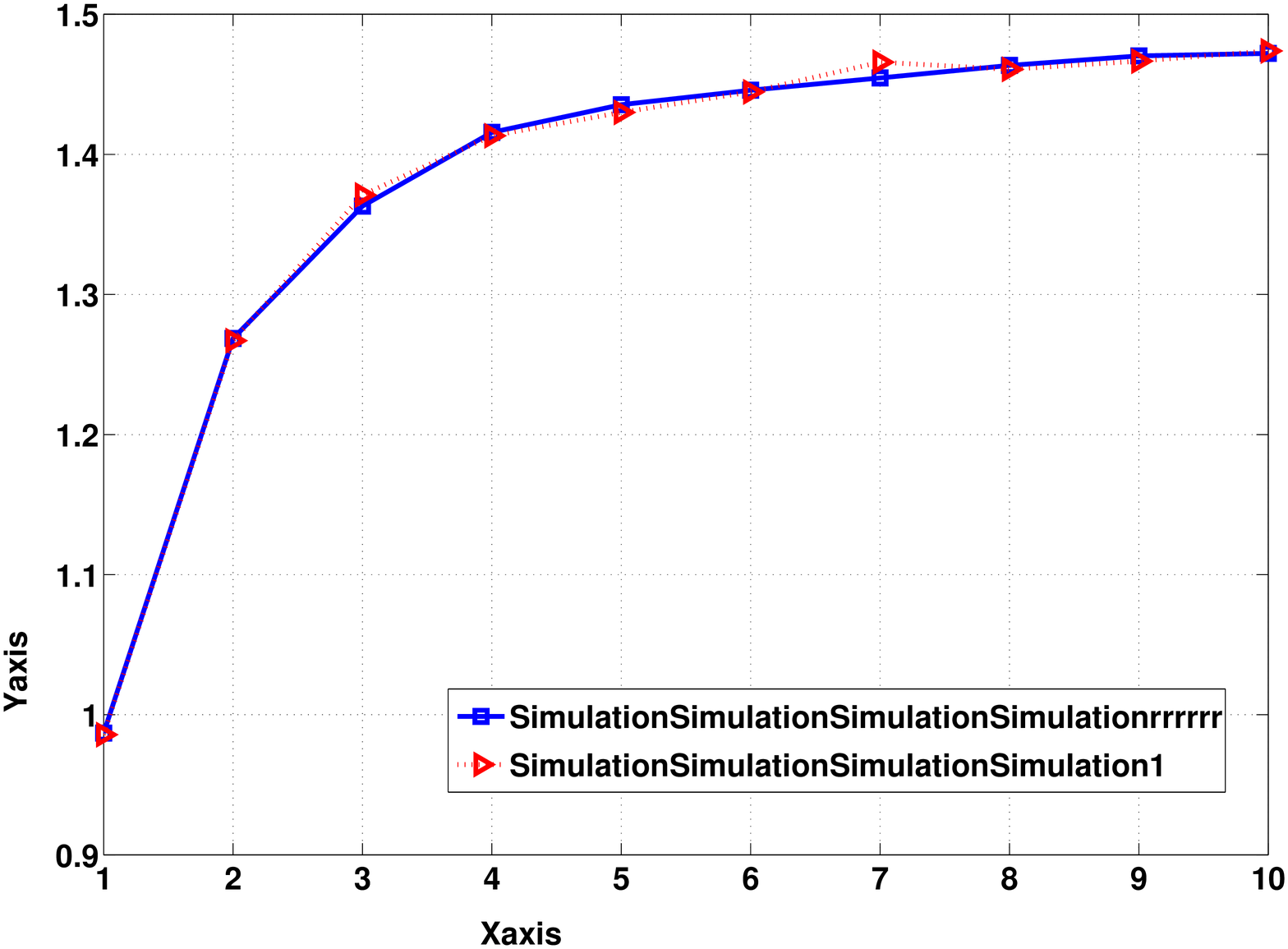}
\vspace{-.05in}
\caption{Throughput of Network $2$ for different buffer sizes.}\label{Net2_Results}
\vspace{-.05in}
\end{figure}
Fig.~\ref{Net1_Results} and Fig.~\ref{Net2_Results} present the variations of the throughput measured by actual simulation of RLNC and the throughput measured by simulation based on the state update rules developed in our work versus the buffer size. As it can be observed, our model is very close to the actual simulation results. Further, it confirms the optimality of RLNC for the infinite buffer setting as the curve approaches to the min-cut capacity for both networks. It is notable that the emulation of the RLNC using the derived state update rules takes significantly lesser time than the exact simulation of the RLNC scheme.
\begin{table}[h!]
\begin{center}
\caption{Number of active states vs. buffer size in Network 1.}$\,$\\
{
\renewcommand{\arraystretch}{1}
\small
\begin{tabular}{||c|c|c||} \hline
Buffer Size & No. of Active States & Upper Bound $(m+1)^{15}$ \\
  \hline\hline
  1  &   44   &  32768  \\\hline
  2  &   600  &  14348907 \\\hline
  3  &   4358 &  1073741824 \\\hline
\end{tabular}
}
\label{FB-Net1Table}
\end{center}
\vspace{-.05in}
\end{table}
Table~\ref{FB-Net1Table} compares the number of states actually visited (identified by simulations) and a crude upper bound on the number of states in the Markov chain model. For Network 1, the number of state variables is $2^4-1=15$, and a provable upper bound for the number of states is $(m+1)^{15}$, where $m$ is the buffer size of each intermediate node. However, it is noticed from simulations that the number of states that is actually realized is much lesser than the bound. This observation signals suggests that a closer look at the Markov chain to reduce its size can simplify the model, thereby rendering it more easily tractable.

\section{Conclusion and Future Work}\label{FB-sec5}
We have derived a novel notion of buffer occupancy for RLNC in wired finite-buffer networks. Using this notion, we developed a Markov-chain-based framework that can identify the throughput offered by RLNC using Monte Carlo simulations. This framework offers significant computational benefits over a complete simulation of RLNC. Though the size of the Markov chain is exponential, simulations suggest that a very small portion of the state space is actually visited in reality. A closer look at the state space and a thorough analysis to reduce the state space needs to be performed to eventually derive analytical throughput estimates.
\bibliographystyle{ieeetr}
\bibliography{ITWRef}

\appendices
\section{Proof of Lemma~\ref{LemRR}}\label{prfRR}
From Definition~\ref{OccDef} it is clear that if $i,j\in S$, then $b_S^+=b_S^-$. The same applies when $i,j\in S^c$. For the case $i\in S^c, j\in S$, the update rule is $b_S^+=b_S^-$ and the proof is similar to the one presented for the case $i\in S, j\in S^c$, which is as follows.

Hence, here we only assume $i\in S, j\in S^c$. Let $\mca^-=\{A_1^-,A_2^-,\ldots ,A_{m_i}^-\}$, $\mcb^-=\{B_1^-,B_2^-,\ldots ,B_{|\mcb^-|}^-\}$, $\mcc^-=\{C_1^-,C_2^-,\ldots ,C_{m_j}^-\}$ and $\mcd^-=\{D_1^-,D_2^-,\ldots ,D_{|\mcd^-|}^-\}$ be the buffer contents of relay $i$, relays $S\setminus\{i\}$, relay $j$, and relays $S^c\setminus\{j\}$ before packet transmission, respectively. Suppose packet $E=\sum_{l=1}^{m_i}\alpha_lA_l^-$ successfully transfers from relay $i$ to relay $j$. Then, for any $S\subseteq [n]$, We will have $\mca^+=\mca^-$, $\mcb^+=\mcb^-$, $\mcd^+=\mcd^-$, and $\mcc^+=\{C_1^-+\beta_1E,C_2^-+\beta_2E,\ldots ,C_{m_j}^-+\beta_{m_j}E\}$. Note that the coefficients $\alpha_l$ and $\beta_k$ are chosen randomly from $\mathbb{F}_q$. Let $\mcg^-=\text{span}\{\mca^-\}\cap \text{span}\{\mcc^-\cup \mcd^-\}$. We consider two cases:
\begin{itemize}
\item \textbf{Case 1}: Suppose there exists $\lambda_l,\theta_k$ such that $\lambda_l\neq 0$ for at least one $l$ and $\sum_l\lambda_lC_l^-+\sum_k\theta_kD_k^-=0$. Hence,
\begin{align}
\sum_l\lambda_lC_l^++\sum_k\theta_kD_k^-=
(\sum_l\lambda_l\beta_l)E &\in& \text{span}\{\mcc^+\cup \mcd^+\}\nonumber
\end{align}
Therefore, $E\in \text{span}\{\mcc^+\cup \mcd^+\}$ whp. Further, if $\mcg^- \neq \text{span}\{\mca^-\}$, then $E\notin \mcg^-$ whp, and $\text{span}\{\mcc^+\cup \mcd^+\}=\text{span}\{\mcc^-\cup \mcd^-\cup \{E\}\}$.
Hence,
\begin{equation}
\vspace{-.1in}
\begin{array}{lll}
b_S^+ &=& \dim(\text{span}\{\mca^-\cup \mcb^-\})\\
&&-\dim(\text{span}\{\mca^-\cup \mcb^-\}\cap \text{span}\{\mcc^-\cup \mcd^-\cup \{E\}\})\\
&=&b_S^--1
\end{array}\nonumber
\end{equation}

Note that $\mcg^- \neq \text{span}\{\mca^-\}$ $\Leftrightarrow$ $I_{\{i\}\to S^c}^{-}>0$, and the existence of such $\lambda_l,\theta_k$ $\Leftrightarrow$ $I_{\{j\}\to S^c\setminus \{j\}}^{-}<m_j$.

On the other hand, if $\mcg^- = \text{span}\{\mca^-\}$, then $E\in \mcg^-$ and since $\mcg^+=\mcg^-$, we will have $b_S^+=b_S^-$.

\item \textbf{Case 2}: Suppose no such $\lambda_l,\theta_k$ as in Case 1 exist.
Let $\mcf^-=\{F_i^-,i\in[|\mcf^-|]\}$ be a basis for $\text{span}\{\mca^-\cup \mcb^-\}\cap \text{span}\{\mcc^-\cup \mcd^-\}$ with $F_l^- = \sum_k\gamma_{lk}C_k^- + \sum_{k'}\mu_{lk'}D_{k'}^-$. Also, let $\mcf^+=\{F_1^+,F_2^+,\ldots,F_{|\mcf^-|}^+\}$, where
 \begin{equation}
\vspace{-.1in}
 F_l^+ \hspace{-0.75mm}= \hspace{-0.75mm}F_l^-\hspace{-0.75mm}+\hspace{-0.75mm}(\sum_k\gamma_{lk}\beta_k)E, \,\,l\in\{1,2,\ldots,|\mcf^-|\}.\label{basis+-}
 \end{equation}
Note that $F_l^+\in \text{span}\{\mca^+\cup \mcb^+\}\cap \text{span}\{\mcc^+\cup \mcd^+\}$.

Suppose $x\in \text{span}\{\mca^+\cup \mcb^+\}\cap \text{span}\{\mcc^+\cup \mcd^+\}$, then there exists representations of $x$ as follows.
\begin{equation}
\vspace{-.1in}
x = \sum_k\eta_kA_k^- + \sum_{k'}\delta_{k'}B_{k'}^-
= \sum_l\xi_l(C_l^-+\beta_lE) + \sum_{l'}\zeta_{l'}D_{l'}^-
\nonumber
\end{equation}
Therefore, we have
\begin{align}
 x-(\sum_l\xi_l\beta_l)E &\in \text{span}\{\mca^-\cup \mcb^-\}\cap \text{span}\{\mcc^-\cup \mcd^-\}\nonumber\\
\Rightarrow x-(\sum_l\xi_l\beta_l)E &= \sum_l\tau_lF_l^-= \sum_l\tau_l(F_l^+-\sum_k\gamma_{lk}\beta_k)E)\nonumber
\end{align}
Therefore,
\begin{align}
\hspace{-3mm}x\hspace{-0.75mm}-\hspace{-0.75mm}\sum_l\tau_lF_l^+ \hspace{-0.75mm}=\hspace{-0.75mm} \Big( \sum_l\xi_l\beta_l \hspace{-0.5mm}-\hspace{-0.5mm} \sum_{k,l} \tau_l\gamma_{lk}\beta_k \Big)E\hspace{-0.5mm}=\hspace{-0.5mm} \Phi(x)E
\label{phix}
\end{align}

We consider two cases here.

\textbf{Sub-case 2a}: First, suppose that $\Phi(x)=0$ for all $x\in \text{span}\{\mca^+\cup \mcb^+\}\cap \text{span}\{\mcc^+\cup \mcd^+\}$. Hence, $\text{span}\{\mcf^+\} = \text{span}\{\mca^+\cup \mcb^+\}\cap \text{span}\{\mcc^+\cup \mcd^+\}$.
Next, we prove that members of $\mcf^+$ are linearly independent. Suppose $\sum_l\omega_lF_l^+=0$, then by (\ref{basis+-}),
\begin{equation}
\sum_l\omega_lF_l^-=\Big(\sum_{l,k}\omega_l\lambda_{lk}\beta_k\Big)E
\label{WiFi}
\end{equation}
Here, if $\mcg^- \neq \text{span}\{\mca^-\}$, then $E\notin \mcf^-$ whp, and $\mcf^+$ are linearly independent, again whp. On the other hand, if $\mcg^- = \text{span}\{\mca^-\}$, then $E\in \mcf^-$ can be uniquely represented as a linear combination of $F_i^-$, $i\in[|\mcf^-|]$. Let $E=\sum_l\psi_lF_l^-$. Given a particular value of $(\omega_1,\cdots,\omega_{|\mcf^-|})\neq \mathbf{0}$, due to the randomness of the $\beta_k$'s, the probability that $\sum_l\omega_lF_l^+=0$ happens is equal to $\frac{1}{q-1}$ which can be made as small as required by choosing a large field size.

Thus, $\mcf^+$ are linearly independent in this case. Therefore,
\begin{equation}
\dim(\text{span}\{\mca^+\cup \mcb^+\}) = \dim(\text{span}\{F^+\})
= \dim(\text{span}\{F^-\}).
\nonumber
\end{equation}
Therefore, the update rule will be $b_S^+=b_S^-$.

\textbf{Sub-case 2b}: suppose that $\Phi(x)\neq 0$ for some $x\in \text{span}\{\mca^+\cup \mcb^+\}\cap \text{span}\{\mcc^+\cup \mcd^+\}$. Then, from (\ref{phix}), $E\in \text{span}\{\mca^+\cup \mcb^+\}\cap \text{span}\{\mcc^+\cup \mcd^+\}$. Now, if $\mcg^- = \text{span}\{\mca^-\}$, then $E\in \text{span}\{\mcc^-\cup \mcd^-\}$ which means that $\text{span}\{\mcc^+\cup \mcd^+\} = \text{span}\{\mcc^-\cup \mcd^-\}$. Thus, the update rule in this case is given by $b_S^+=b_S^-$.
On the other hand, if $\mcg^- \neq \text{span}\{\mca^-\}$, then $E\notin \text{span}\{\mcc^-\cup \mcd^-\}$. However, by (\ref{phix}), $E\in \text{span}\{\mcc^+\cup \mcd^+\}$. Hence, there exists a representation of $E$ as follows
\begin{equation}
E = \sum_l\pi_l(C_l^-+\beta_lE) + \sum_{l'}\varphi_{l'}D_{l'}^-
\label{eq4}
\end{equation}
\vspace{-.1in}
\begin{equation}
\Rightarrow\left(1-\sum_l\pi_l\beta_l\right)E = \sum_l\pi_lC_l^- + \sum_{l'}\varphi_{l'}D_{l'}^-.
\label{eq5}
\end{equation}
Given that $E\notin \text{span}\{\mcc^-\cup \mcd^-\}$, it follows from (\ref{eq5}) that $\sum_l\pi_l\beta_l=1$ which implies that
\begin{equation}
\sum_l\pi_lC_l^- + \sum_{l'}\varphi_{l'}D_{l'}^- = 0.
\label{eq6}
\end{equation}
However, in Case $2$, there cannot be an equation of the form (\ref{eq6}), unless we have $\pi_l=0$ for all $l$. Substituting $\pi_l=0$ in (\ref{eq4}) results in a contradiction. Thus, Sub-case 2b occurs wlp.\hfill$\blacksquare$
\end{itemize}

\end{document}